\DeclareMathAlphabet\mathbfcal{OMS}{cmsy}{b}{n}
\newcommand{\defAs}{\mathrel{\makebox{\:= \hspace{-.2cm} \raisebox{-0.5 ex}[0 ex][0 ex]{\tiny Def}}}}
\newcommand{\dlss}{\mathcal{DL}\langle 4LQS^R\rangle(\D)}
\newcommand{\shdlss}{\mathcal{DL}_{\D}^{4}}
\newcommand{\sroiqd}{\ensuremath{\mathcal{SROIQ}(\D)}}
\newcommand{\flqsr}{\ensuremath{4LQS^R}}
\newcommand{\I}{\mathbf{I}}
\newcommand{\C}{\mathbf{C}}
\newcommand{\D}{\mathbf{D}}
\newcommand{\Ra}{\mathbf{R_A}}
\newcommand{\Rd}{\mathbf{R_D}}
\newcommand{\sym}{\mathsf{Sym}}
\newcommand{\asym}{\mathsf{Asym}}
\newcommand{\refl}{\mathsf{Ref}}
\newcommand{\irref}{\mathsf{Irref}}
\newcommand{\tra}{\mathsf{Tra}}
\newcommand{\fun}{\mathsf{Fun}}
\newcommand{\ck}{\mathsf{cpt}_\mathcal{K}}
\newcommand{\ark}{\mathsf{arl}_\mathcal{K}}
\newcommand{\crk}{\mathsf{crl}_\mathcal{K}}
\newcommand{\ik}{\mathsf{ind}_\mathcal{K}}
\newcommand{\bfk}{\mathsf{bf}_\mathcal{K}^{\D}}
\newcommand{\vipcomment}[1]{}
\newcommand{\pow}{\textit{pow}}
\title{Web ontology representation and reasoning via fragments of set theory}
\author{Domenico Cantone \and  Cristiano Longo \and Marianna Nicolosi-Asmundo \and Daniele Francesco Santamaria}
\institute{
University of Catania, Dept. of Mathematics and Computer Science\\
~email:~\texttt{\{cantone,longo,nicolosi\}@dmi.unict.it, daniele.f.santamaria@gmail.com}
}
\begin{document}
\maketitle

\begin{abstract}
In this paper we use results from Computable Set Theory as a means to represent and reason about description logics and rule languages for the semantic web.

Specifically, we introduce the description logic $\dlss$--admitting features such as min/max cardinality constructs on the left-hand/right-hand side of inclusion axioms, role chain axioms, and datatypes--which turns out to be quite expressive if compared with \sroiqd, the description logic underpinning the Web Ontology Language OWL.
Then we show that the consistency problem for $\dlss$-knowledge bases is decidable by reducing it, through a suitable translation process, to the satisfiability problem of the stratified fragment $4LQS^R$ of set theory, involving variables of four sorts and a restricted form of quantification.
We prove also that, under suitable not very restrictive constraints, the consistency problem for $\dlss$-knowledge bases is \textbf{NP}-complete.
Finally, we provide a $4LQS^R$-translation of rules belonging to the Semantic Web Rule Language (SWRL).

%
\end{abstract}

%
%
%
%
%

\section{Introduction}

Computable Set Theory is a research field started in the late seventies with the purpose of studying the decidability of the satisfiability problem for fragments of set theory. The most efficient decision procedures designed in this area have been implemented within the reasoner $\mathsf{{\AE}tnanova/Referee}$ \cite{SchwCanOmoPol11} and constitute its inferential core.
A wide collection of decidability results obtained up to 2001 can be found in the monographs \cite{CaFeOm90,CaOmPo01}.

Most of the decidability results and applications in computable set theory concern one-sorted multi-level syllogistics, namely collections of formulae admitting variables of one sort only, which range over the Von Neumann universe of sets. Only a few stratified syllogistics, where variables of multiple sorts are allowed, have been investigated, despite the fact that in many fields of computer science and mathematics one often has to deal with multi-sorted languages.
For instance, in \emph{Description Logics} one has to consider entities of different types, namely \emph{individual elements}, \emph{concepts}, namely sets of individuals, and \emph{roles}, namely binary relations over elements.


\newcommand{\mlsscart}{\ensuremath{\mathsf{MLSS}_{2,m}^{\times}}}
\newcommand{\dlmlsscart}{\ensuremath{\mathcal{DL\langle}\mathsf{MLSS}_{2,m}^{\times}\mathcal{\rangle}}}
\newcommand{\dlForallpizerotwo}{\ensuremath{\mathcal{DL\langle}\Forallpizerotwo\mathcal{\rangle}}}
\newcommand{\Forallpizerotwo}{\ensuremath{\mathbf{\forall_{0,2}^{\pi}}}}

Recently, one-sorted multi-level fragments of set theory allowing one to express constructs related to \emph{multi-valued maps} have been studied (see \cite{CanLonNic2010, CanLonNic11, CanLon2014}) and applied in the realm of knowledge representation. In \cite{CanLonPis2010}, for instance, an expressive description logic, called $\dlmlsscart$, has been introduced and the consistency problem for $\dlmlsscart$-knowledge bases has been proved $\mathbf{NP}$-complete. $\dlmlsscart$ has been extended with additional description logic constructs and SWRL rules in \cite{CanLonNic11}, proving that the decision problem for the resulting description logic, called $\dlForallpizerotwo$, is still $\mathbf{NP}$-complete under some conditions. Finally, in \cite{CanLon2014} $\dlForallpizerotwo$ has been extended with some \emph{metamodelling} features.
However, none of the above-mentioned description logics provides any functionality to deal with
datatypes, a simple form of concrete domains that are relevant in real-world applications.


In this paper we introduce an expressive description logic, $\dlss$ (more simply referred to as $\shdlss$ in the rest of the paper), that can be represented in the decidable four-level stratified fragment of set theory \flqsr. The logic $\shdlss$ supports da\-ta\-ty\-pes, and admits concept constructs such as full negation, union and intersection of concepts, concept domain and range, existential quantification and min cardinality on the left-hand side of inclusion axioms, universal quantification and max cardinality on the right-hand side of inclusion axioms. It also supports role constructs such as role chains on the left hand side of inclusion axioms, union, intersection, and complement of roles, and properties on roles such as transitivity, symmetry, reflexivity, and irreflexivity.

We shall prove that the consistency problem for $\shdlss$-knowledge bases is decidable via a reduction to the satisfiability problem for formulae of \flqsr. The latter problem was proved decidable in \cite{CanNic2013}.  We shall also show that the consistency problem for $\shdlss$-knowledge bases involving only suitably constrained $\shdlss$-formulae is $\mathbf{NP}$-complete. Such restrictions are not very limitative: in fact, it turns out that the constrained logic allows one to represent real world ontologies such as \textsf{Ontoceramic}, designed for ancient ceramic cataloguing in collaboration with archaeological experts (see \cite{caa2015:ontoceramic,santaLAP}).

The logic $\shdlss$ is not an extension of \sroiqd, the description logic upon which the W3C standard OWL 2 DL is based, as it admits existential (resp., universal) quantification only on the left-hand (resp., right-hand) side of inclusion axioms. However, $\shdlss$ supports chain axioms that are more liberal than the ones supported by \sroiqd, as they can involve roles that are not subject to any regularity restriction. Moreover, Boolean combination of roles is supported even on the right-hand side of chain axioms. The latter fact is particularly relevant to the problem of expressing rules in OWL. We will
briefly illustrate how \flqsr\space can be used to express SWRL rules in Section \ref{rules}.


The paper is organized as follows. In Section \ref{prelim} we review the syntax and semantics of the set-theoretic fragment \flqsr\ and of the logic \sroiqd. Then, in Section \ref{dlss}, we present the description logic $\shdlss$ 
and prove that the decidability of the consistency problem for $\shdlss$-knowledge bases can be reduced to the satisfiability problem for \flqsr-formulae. In particular, in Section \ref{rules} we show that SWRL rules can be represented within the \flqsr-fragment. 
Finally, in Section \ref{conclusions} we draw our conclusions and give some hints to future work.

\section{Preliminaries}\label{prelim}
In this section we introduce concepts and notions that will be used in the paper.

\subsection{The set-theoretic fragment \flqsr} \label{4LQS}
In order to define the fragment \flqsr, it is convenient to first introduce the syntax and semantics of a more general four-level quantified language, denoted $4LQS$.
Then 
we provide some restrictions on quantified formulae of $4LQS$ that characterize \flqsr.
%
%
We recall that the satisfiability problem for \flqsr\ has been proved decidable in \cite{CanNic2013}.

$4LQS$ involves the four collections of variables $\mathcal{V}_0$, $\mathcal{V}_1$, $\mathcal{V}_2$, $\mathcal{V}_3$, where:

\smallskip
{-  $\mathcal{V}_0$ contains variables of sort $0$, denoted by $x, y, z, ...$; }

{- $\mathcal{V}_1 $ contains variables of sort $1$, denoted by $X^1,Y^1,Z^1,... $; }

{- $\mathcal{V}_2$  contains variables of sort $2$, denoted by $X^2,Y^2,Z^2,... $;}

{- $\mathcal{V}_3$ contains variables of sort $3$, denoted by $X^3,Y^3,Z^3,...$ .}

\medskip
\noindent
In addition to variables, $4LQS$ involves also \emph{pair terms} of the form $\langle x,y \rangle$, for $ x,y \in \mathcal{V}_0$.
\emph{$4LQS$-quantifier-free atomic formulae} are classified as:
\begin{itemize}
\item[-] {level 0: $x=y$, $x \in X^1$, $\langle x,y \rangle = X^2$, $\langle x,y \rangle \in X^3$, where $x,y \in \mathcal{V}_0$, $\langle x,y \rangle$ is a pair term, $X^1 \in\mathcal{V}_1$, $X^2\in \mathcal{V}_2$, $X^3$ in $\mathcal{V}_3$;}
\item[-] {level 1: $X^1=Y^1$, $X^1 \in X^2$, with $X^1,Y^1 \in \mathcal{V}_1$, $X^2$ in $\mathcal{V}_2$;}
\item[-] {level 2: $X^2=Y^2$, $X^2 \in X^3$, with $X^2, Y^2 \in \mathcal{V}_2$, $X^3$ in $\mathcal{V}_3$.}
\end{itemize}

\noindent $4LQS$ \emph{purely universal formulae} are classified as:

\begin{itemize}
\item[-] { level 1: $(\forall z_1)...(\forall z_n) \varphi _0$, where $z_1,..,z_n$  $\in \mathcal{V}_0$ and $\varphi _0$ is any propositional combination of quantifier-free atomic formulae of level 0;}
\item[-] { level 2: $(\forall Z^1_1)...(\forall Z^1_m) \varphi _1$, where $Z^1_1,..,Z^1_m $  $\in \mathcal{V}_1$ and $\varphi _1$ is any propositional combination of quantifier-free atomic formulae of levels 0 and 1 and of purely universal formulae of level 1;}
\item[-] {level 3: $(\forall Z^2_1)...(\forall Z^2_p) \varphi _2$, where $Z^2_1,..,Z^2_p $  $\in \mathcal{V}_2$ and $\varphi _2$ is any propositional combination of quantifier-free atomic formulae and of purely universal formulae of levels 1 and 2.}
\end{itemize}

\noindent
$4LQS$-formulae are all the propositional combinations of quantifier-free atomic formulae of levels 0, 1, 2 and of purely universal formulae of levels 1, 2, 3.

Let $\varphi$ be a $4LQS$-formula. Without loss of generality, we can assume that $\varphi$ contains only $\neg$, $\wedge$, $\vee$ as propositional connectives. Further, let $S_{\varphi}$ be the syntax tree for a $4LQS$-formula $\varphi$,\footnote{The notion of syntax tree for $4LQS$-formulae is similar to the notion of syntax tree for formulae of first-order logic. A precise definition of the latter can be found in \cite{DeJo90}.} and let $\nu$ be a node of $S_{\varphi}$. We say that a $4LQS$-formula $\psi$ occurs within $\varphi$ at position $\nu$ if the subtree of $S_{\varphi}$ rooted at $\nu$
is identical to $S_{\psi}$. In this case we refer to $\nu$ as an occurrence of $\psi$ in $\varphi$ and to the path from the root
of $S_{\varphi}$ to $\nu$ as its occurrence path. An occurrence of $\psi$ within $\varphi$ is
\emph{positive} if its occurrence path deprived by its last node contains an even number of nodes labelled by a $4LQS$-formula of type $\neg \chi$. Otherwise, the occurrence is said to be \emph{negative}.

A $4LQS$-\emph{interpretation} is a pair $\mathbfcal{M}=(D,M)$ where $D$ is any non-empty collection of objects (called domain or universe of $\mathbfcal{M}$) and $M$ is an assignment over variables in $\mathcal{V}_0$, $\mathcal{V}_1$, $\mathcal{V}_2$, $\mathcal{V}_3$ such that

\smallskip
{- $Mx \in D$, for each $ x \in \mathcal{V}_0$;}

{- $MX^1 \in \pow(D)$, for each $X^1 \in \mathcal{V}_1$;}

{- $ MX^2 \in \pow(\pow(D))$, for each $X^2 \in \mathcal{V}_2$; }

{- $MX^3 \in \pow(\pow(\pow(D)))$, for each $X^3 \in \mathcal{V}_3$

(we recall that $\pow(s)$ denotes the powerset of $s$).}

\smallskip
\noindent
We assume that pair terms are interpreted \emph{\`a la} Kuratowski, and therefore we put $ M \langle x,y \rangle =_{Def} \{ \{ Mx \},\{ Mx,My \} \}$. The presence of a pairing operator in the language is very useful for the set theoretic representation of the logic $\shdlss$ and of SWRL rules introduced in Sections \ref{dlss} and \ref{rules}, respectively. Moreover, even though several pairing operators are available (see \cite{FormisanoOP04}), encoding ordered pairs \`a la Kuratowski turns out to be quite straightforward, at least for our purposes.


Next, let

\smallskip
{- $\mathbfcal{M}=(D,M)$ be a $4LQS$-interpretation,}

{- $x_1,....x_n \in \mathcal{V}_0$,~~ $X^1 _1, ... X^1 _m \in \mathcal{V}_1$,~~ $X^2 _1, ... X^2 _p \in \mathcal{V}_2$,}

{- $u_1, ... u_n \in D$,~~ $U^1 _1, ... U^1 _m \in \pow(D)$,~~ $U^2 _1, ... U^2 _p \in \pow(\pow(D))$.}

\smallskip
\noindent
By  $\mathbfcal{M}  [ x_1 / u _1, ..., x_n  / u_n, X^1 _1 / U^1 _1, ... X^1_m / U^1_m, X^2_1 / U^2_1, ... X^2_p / U^2_p  ] $, we denote the interpretation $\mathbfcal{M}'=(D,M')$ such that $M'x_i =u_i$, for $i=1,...,n$, $M'X^1_j = U^1 _j$, for $j=1,...,m$, $M'X^2_k = U^2_k$, for $k=1,...,p$, and which otherwise coincides with $M$ on all remaining variables. Let $\varphi$ be a $4LQS$-formula and let $\mathbfcal{M} =(D,M)$ be a $4LQS$-interpretation. The notion of satisfiability of $\varphi$ by $\mathbfcal{M} $ (denoted by  $ \mathbfcal{M} \models \varphi$) is defined inductively over the structure of $\varphi$. Quantifier-free atomic formulae are evaluated in a standard way according to the usual meaning of the predicates `$\in$'
and `$=$', and purely universal formulae are evaluated as follows:
\begin{itemize}
\item[-]{ $\mathbfcal{M}  \models (\forall z_1) ... (\forall z_n) \varphi _0$ iff $\mathbfcal{M}   [ z_1 / u _1,...,z_n  / u_n] \models \varphi_0$, for all $u_1, ... u_n \in D ; $  }
\item[-]{ $\mathbfcal{M}  \models (\forall Z^1_1) ... (\forall Z^1_m) \varphi _1$ iff $\mathbfcal{M}   [ Z^1_1 / U^1 _1,...,Z^1_n  / U^1_n] \models \varphi_1$, for all $U^1_1, ... U^1_m \in \pow(D) ; $  }
\item[-]{ $\mathbfcal{M}  \models (\forall Z^2_1) ... (\forall Z^2_m) \varphi _2$ iff $\mathbfcal{M}   [ Z^2_1 / U^2 _1,...,Z^2_n  / U^2_n] \models \varphi_2$, for all $U^2_1, ... U^2_m \in \pow(\pow(D)).$  }
\end{itemize}
Finally, compound formulae are interpreted according to the standard rules of propositional logic. If $\mathbfcal{M} \models \varphi$, then $\mathbfcal{M} $ is said to be a $4LQS$-model for $\varphi$. A $4LQS$-formula is said to be \emph{satisfiable} if it has a $4LQS$-model. A $4LQS$-formula is \emph{valid} if it is satisfied by all $4LQS$-interpretations.

Next we present the fragment \flqsr\ of $4LQS$ of our interest, namely the collection of the formulae $\psi$ of $4LQS$ fulfilling the restrictions:
\begin{enumerate}
\item{for every purely universal formula $(\forall Z^1_1),...,(\forall Z^1_m) \varphi_1$ of level 2 occurring in $\psi$ and every purely universal formula $(\forall z_1),...,(\forall z_n) \varphi_0$ of level 1 occurring negatively in $\varphi_1$,
    the condition
\[ {\small
\neg \varphi_0 \rightarrow \overset{n}{ \underset {i=1} \bigwedge} \; \overset {m} { \underset {j=1 }\bigwedge} z_i \in Z^1_j }\]
is a valid $4LQS$-formula (in this case we say that $(\forall z_1),...,(\forall z_n) \varphi_0$ is \emph{linked to the variables} $Z^1_1,...,Z^1_m$);
}

\item{ for every purely universal formula  $(\forall Z^2_1),...,(\forall Z^2_p) \varphi_2$  of level 3 in $\psi$:
\begin{itemize}
\item[-]{every purely universal formula of level 1 occurring negatively in $\varphi_2$ and not occurring in a purely universal formula of level 2 is only allowed to be of the form
\[
(\forall z_1),..., (\forall z_n) \neg( \overset {n}{ \underset {i=1} \bigwedge} \; \overset {n} { \underset {j=1}\bigwedge} \langle z_i,z_j \rangle=Y^2_{ij}),\]
with $Y^2_{ij} \in \mathcal{V}^2$, for $i,j=1,...,n$;}
\item[-]{purely universal formulae $(\forall Z^1_1),...,(\forall Z^1_m) \varphi_1$ of level 2 may occur only positively in $\varphi_2$.}
\end{itemize}
}
\end{enumerate}

Restriction 1 has been introduced for technical reasons concerning the decidability of the satisfiability problem for the fragment. In fact it guarantees that satisfiability is preserved in a suitable finite submodel of $\psi$.  Restriction  2 allows one to express binary relations and several operations on them while keeping simple, at the same time, the decision procedure (for space reasons details are not included here but can be found in \cite{CanNic2013}).

We observe that the semantics of \flqsr\space plainly coincides with that of $4LQS$.

In the \flqsr-fragment one can express several set-theoretic constructs such as a restricted variant of the set former, which in
turn allows one to express other significant set operators such as binary union, intersection, set difference,
the singleton operator, the powerset operator, etc. Within the fragment \flqsr, it is also possible to define binary relations over elements of a domain
together with conditions on them (i.e., reflexivity, transitivity, weak connectedness, irreflexivity, intransitivity) which characterize accessibility relations of well-known modal logics. In particular, the normal modal logic $\mathsf{K45}$ can be translated in the \flqsr-fragment. Again, the interested reader is referred to \cite{CanNic2013} for details.

\subsection{Description Logics}
Description Logics (DL) are a family of formalisms widely used in the field of Knowledge Representation to model application domains and to reason on them \cite{Baader03descriptionlogics}.
DL knowledge bases describe models that are based on individual elements (or, more simply, individuals), classes whose elements are individual names, and binary relationships between individuals. These three types of semantic entities are syntactically denoted by means of individual names, concept names, and role names. In addition, DL provide operators for combining concept and role names into complex concept and role expressions.
One of the leading application domains for DL is the semantic web. In fact, the most recently developed semantic web language, namely OWL 2, is based on a very expressive description logic with datatypes $\D$, called \sroiqd. Extensions of DL with datatypes have been studied and analyzed in \cite{Horrocks:2001a, Motik2008}.

\vipcomment{Datatype reasoning can be performed using an external datatype checking procedure. Standard DL tableau calculi can then be extended to handle datatypes by invoking the datatype checker as an oracle.  In particular, in \cite{Motik2008} it is shown that datatype checking in OWL 2 is NP-hard in the general case, but may become trivial in many (hopefully typical) cases. \marginpar{La parte in rosso potrebbe essere rimossa.} The authors also argue that certain datatypes listed as normative in the current OWL 2 Working Draft may be unsuitable from both a modeling and implementation perspective. They also present a modular datatype checking algorithm that can support any datatype for which it is possible to implement a small set of basic operations that we call a datatype handler and discuss how to implement datatype handlers for number and string datatypes}

The logic \sroiqd\space is briefly introduced in the next section (the interested reader is referred to \cite{Horrocks2006} for details).

\subsubsection{The description logic \sroiqd.}\label{sroiqd}
Let $\D = (N_{D}, N_{C},N_{F},\cdot^{\D})$ be a \emph{datatype map} in the sense of \cite{Motik2008}, where  $N_{D}$ is a finite set of datatypes, $N_{C}$ is a function assigning a set of constants $N_{C}(d)$ to each datatype $d \in N_{D}$, $N_{F}$ is a function assigning a set of facets $N_{F}(d)$ to each $d \in N_{D}$, and $\cdot^{\D}$ is a function assigning a datatype interpretation $d^{\D}$ to each datatype $d \in N_{D}$, a facet interpretation $f^{\D} \subseteq d^{\D}$ to each facet $f \in N_{F}(d)$, and a data value $e_{d}^{\D} \in d^{\D}$ to every constant $e_{d} \in N_{C}(d)$.  We shall assume that the interpretations of the datatypes in $N_{D}$ are nonempty pairwise disjoint sets.

A \emph{facet expression} for a datatype $d \in N_{D}$ is a formula $\psi_d$ constructed from the elements of $N_{F}(d) \cup \{\top_{d},\bot_{d}\}$ by applying a finite number of times the connectives $\neg$, $\wedge$, and $\vee$. The function  $\cdot^{\D}$ is extended to facet expressions for $d  \in N_{D}$ by putting $\top_{d}^{\D} = d^{\D}$, $\bot_{d}^{\D} = \emptyset$, $(\neg f)^{\D} = d^{\D} \setminus f^{\D}$, $(f_1 \wedge f_2)^{\D} = f_1^{\D} \cap f_2^{\D}$, and $(f_1 \vee f_2)^{\D} = f_1^{\D} \cup f_2^{\D}$, for $f, f_1,f_2 \in N_{F}(d)$.

A \emph{data range} $dr$ for $\D$ is either a datatype $d \in N_{D}$, or a finite enumeration of datatype constants $\{e_{d_1},\ldots,e_{d_n}\}$, with $e_{d_i} \in N_{C}(d_i)$ and $d_i \in N_{D}$, or a facet expression $\psi_d$, for $d \in N_{D}$, or their negation.


Let $\Ra$, $\Rd$, $\mathbf{C}$, $\mathbf{I}$ be denumerable pairwise disjoint sets of abstract role names, concrete role names, concept names, and individual names, respectively.
The set of abstract roles is defined as $\Ra \cup \{ R^- \mid R \in \Ra \} \cup U$, where $U$ is the universal role and $R^-$ is the inverse role of $R$.


A role inclusion axiom (RIA) is an expression of the form $w \sqsubseteq R$, where $w$ is a finite string of roles not including the universal role $U$ and $R$ is an abstract role name distinct from the universal role $U$.

 An abstract role hierarchy $\mathsf{R}_{a}^{H}$ is a finite collection of RIAs. \vipcomment{ A strict partial order $\prec$ on  $\Ra \cup \{ R^- \mid R \in \Ra \}$ is called \emph{a regular order} if $\prec$ satisfies, additionally, $S \prec R$ iff $S^- \prec R$, for all roles R and S.\footnote{We recall that a strict partial order $\prec$  on a set $A$ is an irreflexive and transitive relation on $A$.}


 A RIA $w \sqsubseteq R$ is $\prec$-\emph{regular} if $R$ is an abstract role name, and one of the following conditions holds:

\begin{enumerate}
\item{$ w = RR$,}
\item{$w =R^-$,}
\item{$w=S_1...S_n$ and $S_i \prec R$, for all $1 \leq i \leq n$,}
\item{$w=RS_1...S_n$ and $S_i \prec R$, for all $1 \leq i \leq n$,}
\item{$w=S_1...S_nR$ and $S_i \prec R$, for all $1 \leq i \leq n$.}
\end{enumerate}
An abstract role hierarchy $\mathsf{R}_{a}^{H}$ is regular if there exists a regular order $\prec$ such that each RIA in $\mathsf{R}_{a}^{H}$ is $\prec$-regular.
}

A concrete role hierarchy $\mathsf{R}_{\D}^{H}$ is a finite collection of concrete role inclusion axioms $T_i \sqsubseteq T_j$, where $T_i, T_j \in \Rd$.

A role assertion is an expression of one of the types: $\refl(R)$, $\irref(R)$, $\sym(R)$, $\asym(R)$, $\tra(R)$, and $\mathsf{Dis}(R,S)$, where $R,S \in \Ra \cup \{ R^- \mid R \in \Ra \}$.


Given an abstract role hierarchy $\mathsf{R}_{a}^{H}$ and a set of role assertions $\mathsf{R}^{A}$ without transitivity or symmetry assertions ($\sym(R)$ can be represented by a RIA of type $R^- \sqsubseteq R$ and $\tra(R)$ by $RR\sqsubseteq R$), the set of roles that are \emph{simple} in $\mathsf{R}_{a}^{H} \cup \mathsf{R}^{A}$ is inductively defined as follows: (a)
a role name is simple if it does not occur on the right hand side of a RIA in $\mathsf{R}_{a}^{H}$,
(b) an inverse role $R^-$ is simple if $R$ is, and
(c) if $R$ occurs on the right hand of a RIA in $\mathsf{R}_{a}^{H}$, then $R$ is simple if, for each $w \sqsubseteq R \in \mathsf{R}_{a}^{H}$,  $w=S$, for a simple role $S$.

A set of role assertions $\mathsf{R}^{A}$ is called simple if all roles $R$, $S$ appearing in role assertions of the form $\irref(R)$, $\asym(R)$, or $\mathsf{Dis}(R,S)$ are simple in $\mathsf{R}^{A}$.

An \sroiqd-$RBox$ is a set $\mathsf{R} = \mathsf{R}_{a}^{H} \cup \mathsf{R}_{\D}^{H} \cup \mathsf{R}^{A}$ such that $\mathsf{R}_{a}^{H}$ is a regular abstract role hierarchy, $\mathsf{R}_{\D}^{H}$ is a concrete role hierarchy, and $\mathsf{R}^{A}$ is a finite simple set of role assertions. A formal definition of regular abstract role hierarchy can be found in \cite{Horrocks2006}.

Before introducing the formal definitions of $TBox$ and of $ABox$, we define the set of \sroiqd-concepts as the smallest set such that:
\begin{itemize}
\item[-]{every concept name and the constants $\top$, $\bot$ are concepts,}
\item[-]{if $C$, $D$ are concepts, $R$ is an abstract role (possibly inverse), $S$ is a simple role (possibly inverse), $T$ is a concrete role, $dr$ is a data range for $\D$, $a$ is an individual, and $n$ is a non-negative integer, then $C \sqcap D$, $C \sqcup D$, $\neg C$, $\{a\}$, $\forall R.C$, $\exists R.C$, $\exists S.\mathit{Self}$, $\forall T.dr$, $\exists T.dr$, $\geq nS.C$, and $\leq n S.C$ are also concepts.}
\end{itemize}

A general concept inclusion axiom (GCI) is an expression $C \sqsubseteq D$, where $C$, $D$ are \sroiqd-concepts. An \sroiqd-$TBox$ $\mathcal{T}$ is a finite set of CGIs.

Any expression of one of the following forms: $a : C$, $(a,b) : R$, $(a, e_d) : T$, $(a,b) : \neg R$, $(a,e_d) : \neg T$, $a=b$, $ a \neq b$, where $a,b$ are individuals, $e_d$ is a constant in $N_{C}(d)$, $R$ is a (possibly) inverse abstract role, $P$ is a concrete role, and $C$ is a concept, is called an \emph{individual assertion}. An \sroiqd-$ABox$ $\mathcal{A}$ is a finite set of individual assertions.

An \sroiqd-knowledge base is a triple $\mathcal{K} = (\mathcal{R}, \mathcal{T}, \mathcal{A})$ such that $\mathcal{R}$ is an \sroiqd-$RBox$, $\mathcal{T}$ an \sroiqd-$TBox$, and $\mathcal{A}$ an \sroiqd-$ABox$.
The semantics of \sroiqd\space is given by means of an interpretation $\I= (\Delta^\I, \Delta_{\D}, \cdot^\I)$, where $\Delta^\I$ and $\Delta_{\D}$ are non-empty disjoint domains such that $d^\D\subseteq \Delta_{\D}$, for every $d \in N_{D}$, and
$\cdot^\I$ is an interpretation function.
The interpretation of concepts and roles, axioms and assertions is defined in Table \ref{semsroiq}.

{\small
\begin{longtable}{|>{\centering}m{2.5cm}|c|>{\centering\arraybackslash}m{7.5cm}|}
\hline
Name & Syntax & Semantics \\
\hline

concept & $A$ & $ A^\I \subseteq \Delta^\I$\\

ab. (resp., cn.) rl. & $R$ (resp., $T$ )& $R^\I \subseteq \Delta^\I \times \Delta^\I$ \hspace*{0.5cm} (resp., $T^\I \subseteq \Delta^\I \times \Delta_\D$)\\


\scriptsize{ind. (resp., d. cs.)}& $a$ (resp., $e_{d})$)& $a^\I \in \Delta^\I$ (resp., $e_{d}^{\D} \in d^\D$)\\

nominal & $\{a\}$ & $\{a\}^\I = \{a^\I \}$\\

dtype  (resp., ng.) & $d$ (resp., $\neg d$)& $ d^\D \subseteq \Delta_\D$ (resp., $\Delta_\D \setminus d^\D $)\\



\hline
data range  & $\{ e_{d_1}, \ldots , e_{d_n} \}$& $\{ e_{d_1}, \ldots , e_{d_n} \}^\D = \{e_{d_1}^\D \} \cup \ldots \cup \{e_{d_n}^\D \} $ \\

data range   &  $\psi_d$ & $\psi_d^\D$\\

data range    & $\neg dr$ &  $\Delta_\D \setminus dr^\D $\\

\hline

top (resp., bot.) & $\top$ (resp., $\bot$ )& $\Delta^\I$  (resp., $\emptyset$)\\


negation & $\neg C$ & $(\neg C)^\I = \Delta^\I \setminus C$ \\

conj. (resp., disj.) & $C \sqcap D$ (resp., $C \sqcup D$)& $ (C \sqcap D)^\I = C^\I \cap D^\I$  (resp., $ (C \sqcup D)^\I = C^\I \cup D^\I$)\\


univ.restriction & $\forall R.C$& $(\forall R.C)^\I = \{ x \in \Delta^\I : \forall y \in \Delta^\I \textbf{.} \langle x,y \rangle \in R^\I \rightarrow y \in C^\I \}$ \\

exist. restriction & $\exists R.C$ & $(\exists R.C)^\I = \{x \in \Delta^\I : \exists y \in C^\I \textbf{.} \langle x,y \rangle \in R^\I\}$  \\

self concept & $\exists R.\mathit{Self}$ & $(\exists R.\mathit{Self})^\I = \{ x \in \Delta^\I : \langle x,x \rangle \in R^\I  \}$ \\

datatype exists & $\exists T.dr$ & $(\exists T.dr)^\I = \{x \in \Delta^\I : \exists y \in dr^\D \textbf{.} \langle x,y \rangle \in T^\I\}$\\

datatype value & $\forall T.dr$ & $(\forall T.dr)^\I\hspace*{-0.1cm} = \hspace*{-0.1cm}\{x \in \Delta^\I : \forall y \in \Delta_\D \textbf{.} \langle x,y \rangle \in T^\I\hspace*{-0.1cm} \rightarrow\hspace*{-0.1cm} y \in dr^\D \}$\\

qualified number  &  $\leq_n\!\! R.C$ & $(\leq_n\!\! R.C)^\I = \{ x \in \Delta^\I : |\{ y \in C^\I : \langle x,y \rangle \in R^\I \}| \leq n  \}$\\

restriction& $\geq_n\!\! R.C$ & $(\geq_n\!\! R.C)^\I =\{ x \in \Delta^\I : |\{ y \in C^\I: \langle x,y \rangle \in R^\I  \}| \geq n  \}$ \\

qual. datatype   &  $\leq_n\!\! T.dr$ & $(\leq_n\!\! T.dr)^\I = \{ x \in \Delta^\I : |\{ y \in dr^\D : \langle x,y \rangle \in T^\I \}| \leq n  \}$\\

number restr.& $\geq_n\!\! T.dr$ & $(\geq_n\!\! T.dr)^\I =\{ x \in \Delta^\I : |\{ y \in dr^\D: \langle x,y \rangle \in T^\I  \}| \geq n  \}$ \\

nominals & $\{ a_1, \ldots , a_n \}$& $\{ a_1, \ldots , a_n \}^\I = \{a_1^\I \} \cup \ldots \cup \{a_n^\I \} $ \\

\hline

universal role & U & $(U)^\I = \Delta^\I \times \Delta^\I$\\

inverse role & $R^-$ & $(R^-)^\I = \{\langle y,x \rangle  \mid \langle x,y \rangle \in R^\I\}$\\

\hline

concept subsum. & $C_1 \sqsubseteq C_2$ & $\I \models_\D C_1 \sqsubseteq C_2 \; \Longleftrightarrow \; C_1^\I \subseteq C_1^\I$ \\

ab. role subsum. & $ R_1 \sqsubseteq R_2$ & $\I \models_\D R_1 \sqsubseteq R_2 \; \Longleftrightarrow \; R_1^\I \subseteq R_1^\I$\\

role incl. axiom & $S_1 \ldots S_n \sqsubseteq R$ & $\I \models_\D S_1 \ldots S_n \sqsubseteq R  \; \Longleftrightarrow \; S_1^\I\circ \ldots \circ S_n^\I \subseteq R^\I$\\
cn. role subsum. & $ T_1 \sqsubseteq T_2$ & $\I \models_\D T_1 \sqsubseteq T_2 \; \Longleftrightarrow \; T_1^\I \subseteq T_1^\I$\\

\hline

symmetric role & $\sym(R)$ & $\I \models_\D \sym(R) \; \Longleftrightarrow \; (R^-)^\I \subseteq R^\I$\\

asymmetric role & $\asym(R)$ & $\I \models_\D \asym(R) \; \Longleftrightarrow \; R^\I \cap (R^-)^\I = \emptyset $\\

transitive role & $\tra(R)$ & $\I \models_\D \tra(R) \; \Longleftrightarrow \; R^\I \circ R^\I \subseteq R^\I$\\

disjoint role & $\mathsf{Dis}(R,S)$ & $\I \models_\D \mathsf{Dis}(R,S) \; \Longleftrightarrow \; R^\I \cap S^\I = \emptyset$\\

reflexive role & $\refl(R)$& $\I \models_\D \refl(R) \; \Longleftrightarrow \; \{ \langle x,x \rangle \mid x \in \Delta^\I\} \subseteq R^\I$\\

irreflexive role & $\irref(R)$& $\I \models_\D \irref(R) \; \Longleftrightarrow \; R^\I \cap \{ \langle x,x \rangle \mid x \in \Delta^\I\} = \emptyset  $\\

func. ab. role & $\fun(R)$ & $\I \models_\D \fun(R) \; \Longleftrightarrow \; (R^{-})^\I \circ R^\I \subseteq  \{ \langle x,x \rangle \mid x \in \Delta^\I\}$  \\

func. cn. role & $\fun(T)$ & $\I \models_\D \fun(T) \; \Longleftrightarrow \; \langle x,y \rangle \in T^\I \mbox{ and } \langle x,z \rangle \in T^\I \mbox{ imply } y = z$  \\

\hline

concept assertion & $a : C_1$ & $\I \models_\D a : C_1 \; \Longleftrightarrow \; (a^\I \in C_1^\I) $ \\

agreement & $a=b$ & $\I \models_\D a=b \; \Longleftrightarrow \; a^\I=b^\I$\\

disagreement & $a \neq b$ & $\I \models_\D a \neq b  \; \Longleftrightarrow \; \neg (a^\I = b^\I)$\\


ab. role asser. & $ (a,b) : R $ & $\I \models_\D (a,b) : R \; \Longleftrightarrow \;  \langle a^\I , b^\I \rangle \in R^\I$ \\

cn. role asser. & $ (a,e_d) : T $ & $\I \models_\D (a,e_d) : T \; \Longleftrightarrow \;   \langle a^\I , e_d^\D \rangle \in T^\I$ \\

ng. ab. role asser. & $ (a,b) : \neg R $ & $\I \models_\D (a,b) : \neg R \; \Longleftrightarrow \;   \neg (\langle a^\I , b^\I \rangle \in R^\I)$ \\

ng. cn. role asser. & $ (a,e_d) : \neg T $ &  $\I \models_\D (a,e_d) : \neg T \; \Longleftrightarrow \;  \neg (\langle a^\I , e_d^\D \rangle \in T^\I)$ \\
%
%
%
%
%
%
%
%
%





\hline \caption{Semantics of \sroiqd.}\\
\caption*{\emph{Legenda.} ab: abstract, cn.: concrete, rl.: role, ind.: individual, d. cs.: datatype constant, dtype: datatype, ng.: negated, bot.: bottom, incl.: inclusion, asser.: assertion.}  \label{semsroiq}
\end{longtable}}

Let $\mathcal{A}$, $\mathcal{R}$, $\mathcal{T}$ be, respectively, an \sroiqd-$ABox$, an \sroiqd-$RBox$, and  an \sroiqd-$TBox$. An interpretation $\I= (\Delta ^ \I, \Delta_{\D}, \cdot ^ \I)$ is a $\D$-model of $\mathcal{R}$ (resp., $\mathcal{T}$), and we write $\I \models_{\D} \mathcal{R}$ (resp., $\I \models_{\D} \mathcal{T}$), if $\I$ satisfies each axiom in $\mathcal{R}$ (resp., $\mathcal{T}$) according to the semantic rules in Table \ref{semsroiq}.  Analogously,  $\I= (\Delta^ \I, \Delta_{\D}, \cdot^\I)$ is a $\D$-model of $\mathcal{A}$, and we write $\I \models_{\D} \mathcal{A}$, if $\I$ satisfies each assertion in $\mathcal{A}$, according to the semantic rules in Table \ref{semsroiq}.

An \sroiqd-knowledge base $\mathcal{K}=(\mathcal{A}, \mathcal{T}, \mathcal{R})$ is consistent if there is an interpretation $\I= (\Delta^ \I, \Delta_{\D}, \cdot^\I)$ that is a $\D$-model of $\mathcal{A}$,  $\mathcal{T}$, and $\mathcal{R}$.

Decidability of the consistency problem for \sroiqd-knowledge bases was proved in \cite{Horrocks2006} by means of a tableau-based decision procedure and its computational complexity was shown to be \textbf{N2EXPTime}-complete in \cite{Kazakov:08:RIQ:SROIQ}.

\section{The logic $\dlss$}\label{dlss}
In this section we introduce the description logic $\dlss$ (shortly referred to as $\shdlss$) and prove that the consistency problem for $\shdlss$-knowledge bases is decidable by reducing it to the satisfiability problem for $\flqsr$-formulae. Then we show that under certain restrictions the consistency problem for $\shdlss$-knowledge bases is \textbf{NP}-complete. Finally we briefly illustrate how SWRL-rules can be translated into the language of $\flqsr$.

Let $\D$, $\Ra$, $\Rd$, $\I$, $\C$ be as in Section \ref{sroiqd}.

\noindent
(a) $\shdlss$-datatype, (b) $\shdlss$-concept, (c) $\shdlss$-abstract role, and (d) $\shdlss$-concrete role terms are constructed according to the following syntax rules:
\begin{itemize}
\item[(a)] $t_1, t_2 \longrightarrow dr ~|~\neg t_1 ~|~t_1 \sqcap t_2 ~|~t_1 \sqcup t_2 ~|~\{e_{d}\}\, ,$

\item[(b)] $C_1, C_ 2 \longrightarrow A ~|~\top ~|~\bot ~|~\neg C_1 ~|~C_1 \sqcup C_2 ~|~C_1 \sqcap C_2 ~|~\{a\} ~|~\exists R.\mathit{Self}| \exists R.\{a\}| \exists P.\{e_{d}\}\, ,$

\item[(c)] $R_1, R_2 \longrightarrow S ~|~U ~|~R_1^{-} ~|~ \neg R_1 ~|~R_1 \sqcup R_2 ~|~R_1 \sqcap R_2 ~|~R_{C_1 |} ~|~R_{|C_1} ~|~R_{C_1 ~|~C_2} ~|~id(C)\, ,$

\item[(d)] $P \longrightarrow T ~|~\neg P ~|~P_{C_1 |} ~|~P_{|t_1} ~|~P_{C_1 | t_1}\, ,$
\end{itemize}
where $dr$ is a data range for $\D$, $t_1,t_2$ are datatype terms, $e_{d}$ is a constant in $N_{C}(d)$, $a$ is an individual name, $A$ is a concept name, $C_1, C_2$ are $\shdlss$-concept terms, $S$ is an abstract role name, $R, R_1,R_2$ are $\shdlss$-abstract role terms, $T$ a concrete role name, and $P$ a $\shdlss$-concrete role term.

A $\shdlss$-knowledge base is a triple ${\mathcal K} = (\mathcal{R}, \mathcal{T}, \mathcal{A})$ such that $\mathcal{R}$ is a $\shdlss$-$RBox$, $\mathcal{T}$ is a $\shdlss$-$TBox$, and $\mathcal{A}$ a $\shdlss$-$ABox$. A $\shdlss$-$RBox$ is a collection of statements of the following forms: $R_1 \equiv R_2$, $R_1 \sqsubseteq R_2$, $R_1\ldots R_n \sqsubseteq R_{n+1}$, $\sym(R_1)$, $\asym(R_1)$, $\refl(R_1)$, $\irref(R_1)$, $\mathsf{Dis}(R_1,R_2)$,
$\tra(R_1)$, $\fun(R_1)$, $P_1 \equiv P_2$, $P_1 \sqsubseteq P_2$, $\fun(P_1)$, where $R_1,R_2$ are $\shdlss$-abstract role terms and $P_1,P_2$ are $\shdlss$-concrete role terms. A $\shdlss$-$TBox$ is a set of statements of the types:
\begin{itemize}
\item[-] $C_1 \equiv C_2$, $C_1 \sqsubseteq C_2$, $C_1 \sqsubseteq \forall R_1.C_2$, $\exists R_1.C_1 \sqsubseteq C_2$, $\geq_n\!\! R_1. C_1 \sqsubseteq C_2$, \\$C_1 \sqsubseteq {\leq_n\!\! R_1. C_2}$,
\item[-] $t_1 \equiv t_2$, $t_1 \sqsubseteq t_2$, $C_1 \sqsubseteq \forall P_1.t_1$, $\exists P_1.t_1 \sqsubseteq C_1$, $\geq_n\!\! P_1. t_1 \sqsubseteq C_1$, $C_1 \sqsubseteq {\leq_n\!\! P_1. t_1}$,
\end{itemize}
where $C_1,C_2$ are $\shdlss$-concept terms, $t_1,t_2$ datatype terms, $R_1$  a $\shdlss$-abstract role term, $P_1$ a $\shdlss$-concrete role term.

A $\shdlss$-$ABox$ is a set of assertions of the forms: $a : C_1$, $(a,b) : R_1$, $(a,b) : \neg R_1$, $a=b$, $a \neq b$, $e_{d} : t_1$, $(a, e_{d}) : P_1$, $(a, e_{d}) : \neg P_1$, 
where $C_1$ is a $\shdlss$-concept term, $d$ is a datatype, $t_1$ is a datatype term, $R_1$ is a $\shdlss$-abstract role term, $P_1$ is a $\shdlss$-concrete role term, $a,b$ are individual names, and $e_{d}$ is a constant in $N_{C}(d)$.

The semantics of $\shdlss$ is similar to that of \sroiqd\ (cf.\  Section \ref{sroiqd}). The interpretation of terms, axioms, and assertions of $\shdlss$ shared with \sroiqd\space is illustrated in Table \ref{semsroiq} while the semantics of terms and statements specific to $\shdlss$ is described in Table \ref{semdlss}. The notions of $\D$-model of a $\shdlss$-$RBox$, $\shdlss$-$TBox$, $\shdlss$-$ABox$, and the notion of consistency of a $\shdlss$-knowledge base are similar to the ones described in Section \ref{sroiqd} for \sroiqd.
{\small
\begin{longtable}{|>{\centering}m{4cm}|c|c|}
\hline
Name & Syntax & Semantics \\
\hline

data range $dr$ & $ dr $ & $  dr^{\D}\subseteq \Delta_{\D} $ \\

negative datatype term & $ \neg t_1 $ & $  (\neg t_1)^{\D} = \Delta_{\D} \setminus t_1^{\D}$ \\

datatype terms intersection & $ t_1 \sqcap t_2 $ & $  (t_1 \sqcap t_2)^{\D} = t_1^{\D} \cap t_2^{\D} $ \\

datatype terms union & $ t_1 \sqcup t_2 $ & $  (t_1 \sqcup t_2)^{\D} = t_1^{\D} \cup t_2^{\D} $ \\

constant in $N_{C}(d)$ & $ e_{d} $ & $ e_{d}^\D \in d^\D$ \\

\hline

valued exist. quantification & $\exists R.{a}$ & $(\exists R.{a})^\I = \{ x \in \Delta^\I : \langle x,a^\I \rangle \in R^\I  \}$ \\

datatyped exist. quantif. & $\exists P.{e_{d}}$ & $(\exists P.e_{d})^\I = \{ x \in \Delta^\I : \langle x, e^\D_{d} \rangle \in P^\I  \}$ \\

\hline

abstract role complement & $ \neg R $ & $ (\neg R)^\I=(\Delta^\I \times \Delta^\I) \setminus R^\I $\\

abstract role union & $R_1 \sqcup R_2$ & $ (R_1 \sqcup R_2)^\I = R_1^\I \cup R_2^\I $\\

abstract role intersection & $R_1 \sqcap R_2$ & $ (R_1 \sqcap R_2)^\I = R_1^\I \cap R_2^\I $\\

abstract role domain restr. & $R_{C \mid }$ & $ (R_{C \mid })^\I = \{ \langle x,y \rangle \in R^\I : x \in C^\I  \} $\\

concrete role complement & $ \neg P $ & $ (\neg P)^\I=(\Delta^\I \times \Delta^\D) \setminus P^\I $\\

concrete role domain restr. & $P_{C \mid }$ & $ (P_{C \mid })^\I = \{ \langle x,y \rangle \in P^\I : x \in C^\I  \} $\\

concrete role range restr. & $P_{ \mid t}$ &  $ (P_{\mid t})^\I = \{ \langle x,y \rangle \in P^\I : y \in t^\D  \} $\\

concrete role restriction & $P_{ C_1 \mid t}$ &  $ (P_{C_1 \mid t})^\I = \{ \langle x,y \rangle \in P^\I : x \in C_1^\I \wedge y \in t^\D  \} $\\

\hline

datatype terms equivalence & $ t_1 \equiv t_2 $ & $ \I \models_{\D} t_1 \equiv t_2 \Longleftrightarrow t_1^{\D} = t_2^{\D}$\\

datatype terms diseq. & $ t_1 \not\equiv t_2 $ & $ \I \models_{\D} t_1 \not\equiv t_2 \Longleftrightarrow t_1^{\D} \neq t_2^{\D}$\\

datatype terms subsum. & $ t_1 \sqsubseteq t_2 $ &  $ \I \models_{\D} (t_1 \sqsubseteq t_2) \Longleftrightarrow t_1^{\D} \subseteq t_2^{\D} $ \\

%
%
%
\hline \caption{Semantics of terms and axioms specific to $\shdlss$.} \label{semdlss}
\end{longtable}
}


In the following theorem we prove the decidability of the consistency problem for $\shdlss$-knowledge bases.

\begin{theorem}\label{theorem}
Let $\mathcal{K}$ be a $\shdlss$-knowledge base. Then, one can construct a \flqsr-formula $\varphi_{\mathcal{K}}$ s.t. $\varphi_{\mathcal{K}}$ is satisfiable if and only if $\mathcal{K}$ is consistent.  \end{theorem}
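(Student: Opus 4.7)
The plan is to define a syntactic translation $\tau$ that sends every ingredient of $\mathcal{K}$ to a $\flqsr$-formula, and then take $\varphi_{\mathcal{K}}$ to be the conjunction of all these translations. The translation of atomic entities is straightforward once we match sorts with semantic types: individual names $a$ are mapped to sort-$0$ variables $x_a$, concept names $C$ to sort-$1$ variables $X_C^1$, and abstract/concrete role names $R$ (which are binary relations) to sort-$3$ variables $X_R^3$, so that $(a,b):R$ becomes $\langle x_a,x_b\rangle\in X_R^3$. For complex concept and role terms we proceed compositionally by introducing a fresh defining variable together with a characterising universal axiom, e.g.\ for $R_1\sqcap R_2$ we pick a fresh $X^3$ and add $(\forall z_1)(\forall z_2)(\langle z_1,z_2\rangle\in X^3\leftrightarrow \langle z_1,z_2\rangle\in X_{R_1}^3 \wedge \langle z_1,z_2\rangle\in X_{R_2}^3)$, and analogously for $\neg R$, $R^-$, $R_{C\mid}$, nominals $\{a\}$, self-concepts $\exists R.\mathit{Self}$, and the datatype-term constructors; the universal role $U$ receives the axiom $(\forall z_1)(\forall z_2)\,\langle z_1,z_2\rangle\in X_U^3$. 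Datatypes are handled by a disjoint part of the universe: we introduce a sort-$1$ variable $\Delta_\D^{1}$ and, for each datatype $d$, a sort-$1$ variable $X_d^1$ with axioms forcing $X_d^1\subseteq\Delta_\D^1$ and $\Delta_\D^1\cap X_C^1=\emptyset$ for every (abstract) concept variable introduced.

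Next I would give uniform translation schemes for the axiom forms. TBox inclusions $C_1\sqsubseteq C_2$ become $(\forall z)(z\in X_{C_1}^1\to z\in X_{C_2}^1)$; the characteristic $\shdlss$ shapes $C_1\sqsubseteq\forall R.C_2$ and $\exists R.C_1\sqsubseteq C_2$ become $(\forall z_1)(\forall z_2)(z_1\in X_{C_1}^1\wedge\langle z_1,z_2\rangle\in X_R^3 \to z_2\in X_{C_2}^1)$ and $(\forall z_1)(\forall z_2)(\langle z_1,z_2\rangle\in X_R^3\wedge z_2\in X_{C_1}^1\to z_1\in X_{C_2}^1)$ respectively; chain axioms $R_1\ldots R_n\sqsubseteq R_{n+1}$ become purely universal formulae of level $1$ over $n+1$ variables $z_0,\ldots,z_n$. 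For the number restrictions $\geq_n R.C\sqsubseteq C'$ and $C'\sqsubseteq \leq_n R.C$ I would translate them into (negations of) $(\forall z,z_1,\dots,z_n)$-formulae asserting the existence/non-existence of $n$ pairwise distinct $R$-successors in $C$; this produces purely universal formulae of level $1$. Role axioms ($\sym,\asym,\tra,\refl,\irref,\fun,\mathsf{Dis}$, RIAs) translate immediately to level-$1$ purely universal formulae over the sort-$3$ variables representing the roles, and ABox assertions become quantifier-free level-$0$ atoms, with $a\neq b$ rendered as $\neg(x_a=x_b)$.

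The formula $\varphi_{\mathcal{K}}$ is the conjunction of the translations of every axiom and every assertion in $\mathcal{R}\cup\mathcal{T}\cup\mathcal{A}$, together with the defining axioms for complex terms and for datatypes. Soundness is shown by taking any $\D$-model $\I=(\Delta^\I,\Delta_\D,\cdot^\I)$ of $\mathcal{K}$ and constructing a $4LQS$-interpretation $\mathbfcal{M}=(D,M)$ with $D=\Delta^\I\cup\Delta_\D$ and $M$ defined by $Mx_a=a^\I$, $MX_C^1=C^\I$, $MX_R^3=R^\I$, $M\Delta_\D^1=\Delta_\D$, $MX_d^1=d^\D$, and then verifying clause by clause, using the semantics tables, that every conjunct of $\varphi_{\mathcal{K}}$ holds. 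Completeness is the reverse: from a $4LQS$-model $\mathbfcal{M}$ of $\varphi_{\mathcal{K}}$ we read off $\Delta^\I$ as the complement of $M\Delta_\D^1$ in $D$, $\Delta_\D$ as $M\Delta_\D^1$, $a^\I\defAs Mx_a$, $C^\I\defAs MX_C^1$, $R^\I\defAs MX_R^3$, and check that each $\shdlss$-axiom holds in the resulting interpretation.

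The main obstacle, and what will require real care rather than routine checking, is ensuring that every conjunct of $\varphi_{\mathcal{K}}$ lies in the fragment $\flqsr$, i.e.\ satisfies the linking condition~(1) and the level-$3$ restrictions~(2). Most translations sit at levels $0$--$2$, but the defining axioms for roles built by Boolean combinations and the chain-axiom translations contain universal quantifiers over sort-$0$ variables whose bodies mention sort-$3$ atoms $\langle z_i,z_j\rangle\in X_R^3$; to keep restriction~(2) satisfied I would systematically introduce, for each sort-$3$ variable $X_R^3$ used in a negative occurrence inside a level-$3$ quantifier, the corresponding sort-$2$ pair variables $Y_{ij}^2$ allowed by the template of restriction~(2), and rewrite the clauses accordingly. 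Similarly, to meet restriction~(1) I would ensure that each inner level-$1$ universal block that occurs negatively has its quantified sort-$0$ variables ranging over a sort-$1$ variable representing the whole universe, introduced once and for all as a fresh $X_\top^1$ with axiom $(\forall z)\,z\in X_\top^1$; then the implication from negation of the matrix to the required membership is trivially valid. Once restrictions~(1) and~(2) are met, the decidability result of~\cite{CanNic2013} recalled in Section~\ref{4LQS} applies and gives the equivalence claimed by the theorem.
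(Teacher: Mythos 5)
Your overall strategy is the same as the paper's: individuals go to level-0 variables, concepts and datatype terms to level-1 variables, abstract and concrete roles to level-3 variables via Kuratowski pairs, each (normalized) axiom is translated componentwise into a universally quantified formula, background axioms split the universe into an individual part and a data part, and the equivalence is proved by the two model transformations you describe. Where the paper first rewrites $\mathcal{K}$ into primitive statement forms ($C_1\equiv\neg C_2$, $R_1\equiv R_2\sqcup R_3$, etc.), you introduce fresh defining variables with characterising axioms; these are equivalent devices.

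There are, however, two genuine gaps. First, your datatype machinery is too thin. The paper's constraints $\psi_4$--$\psi_6$ and $\psi_{10}$--$\psi_{12}$ force the datatype variables to denote nonempty, pairwise disjoint subsets of $X_{\D}^1$, every constant to lie in its datatype ($x_{e_d}\in X_d^1$), facets to lie inside their datatypes, facet expressions to unfold Boolean-wise, and enumerations and nominal sets to be the corresponding finite sets. Without at least constant membership and pairwise disjointness, satisfiability of your $\varphi_{\mathcal{K}}$ does not imply consistency of $\mathcal{K}$: the ABoxes $\{e_d : \neg d\}$ or $\{e_d : d'\}$ with $d\neq d'$ are inconsistent under the fixed datatype map $\D$, yet their translations are satisfiable if nothing ties $x_{e_d}$ to $X_d^1$ and keeps $X_d^1$, $X_{d'}^1$ disjoint. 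Second, your device for meeting the \flqsr\ restrictions does not work as stated. Restriction~(1) requires the implication $\neg\varphi_0\rightarrow\bigwedge_{i,j} z_i\in Z^1_j$ to be a \emph{valid} formula, where the $Z^1_j$ are the universally bound level-1 variables of the enclosing level-2 block; adjoining an axiom $(\forall z)(z\in X_{\top}^1)$ for one fixed variable cannot create validity, and restriction~(2) only constrains purely universal formulae of level 3, i.e.\ those quantifying level-2 variables. The paper avoids restriction~(1) altogether (its translation contains no purely universal formulae of level 2) and meets restriction~(2) by writing the ``role variables contain only pairs, with components in $X_{\I}^1$ (resp.\ $X_{\D}^1$)'' constraints $\psi_8,\psi_9$ exactly in the admissible template $(\forall z_1)(\forall z_2)\neg(\langle z_1,z_2\rangle = Z^2)$. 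Relatedly, the universal role must be relativised to the individual part as in the paper's $\psi_7$: your axiom $(\forall z_1)(\forall z_2)\,\langle z_1,z_2\rangle\in X_U^3$ quantifies over the whole domain $\Delta^{\I}\cup\Delta_{\D}$, so the interpretation built from a $\D$-model of $\mathcal{K}$ (where $U^{\I}=\Delta^{\I}\times\Delta^{\I}$) fails to satisfy it, breaking the consistency-implies-satisfiability direction.
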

\begin{proof}
As a preliminary step, observe that the statements of the $\shdlss$-knowledge base $\mathcal{K}$ that need to be considered  are those of the following types:
\begin{itemize}
\item[-] $C_1 \equiv \top$, $C_1 \equiv \neg C_2$, $C_1 \equiv C_2 \sqcup C_3$, $C_1 \equiv \{a\}$, $C_1 \sqsubseteq \forall R_1.C_2$, $\exists R_1.C_1 \sqsubseteq C_2$, $\geq_n\!\! R_1. C_1 \sqsubseteq C_2$, $C_1 \sqsubseteq {\leq_n\!\! R_1. C_2}$, $C_1 \sqsubseteq \forall P_1.t_1$, $\exists P_1.t_1 \sqsubseteq C_1$, $\geq_n\!\! P_1. t_1 \sqsubseteq C_1$, $C_1 \sqsubseteq {\leq_n\!\! P_1. t_1}$,
\item[-] $R_1 \equiv U$, $R_1 \equiv \neg R_2$, $R_1 \equiv R_2 \sqcup R_3$, $R_1 \equiv R_2^{-}$, $R_1 \equiv id(C_1)$, $R_1 \equiv R_{2_{C_1 |}}$, $R_1 \ldots R_n \sqsubseteq R_{n+1}$, $\refl(R_1)$, $\irref(R_1)$, $\mathsf{Dis}(R_1,R_2)$, $\fun(R_1)$,
\item[-] $P_1 \equiv P_2$, $P_1 \equiv \neg P_2$, $P_1 \sqsubseteq P_2$, $\fun(P_1)$, $P_1 \equiv P_{2_{C_1 |}}$, $P_1 \equiv P_{2_{C_1 | t_1}}$, $P_1 \equiv P_{2_{| t_1}}$,
\item[-] $t_1 \equiv t_2$, $t_1 \equiv \neg t_2$, $t_1 \equiv t_2 \sqcup t_3$, $t_1 \equiv \{e_d\}$,
\item[-] $a : C_1$, $(a,b) : R_1$, $(a,b) : \neg R_1$, $a=b$, $a \neq b$,
$e_{d} : t_1$, $(a, e_{d}) : P_1$, $(a, e_{d}) : \neg P_1$.
\end{itemize}
In order to define the \flqsr-formula $\varphi_{\mathcal{K}}$, we shall make use of a mapping $\tau$ from the $\shdlss$-statements (and their conjunctions) listed above into \flqsr-formulae. To prepare for the definition of $\tau$, we map injectively individuals $a$ and constants $e_d \in N_{C}(d)$ into level $0$ variables $x_a$ and $x_{e_d}$, the constant concepts $\top$ and $\bot$, datatype terms $t$, and concept terms $C$ into level $1$ variables $X_{\top}^1$, $X_{\bot}^1$, $X_{t}^1$, $X_{C}^1$, respectively, and the universal relation on individuals $U$, abstract role terms $R$, and concrete role terms $P$ into level $3$ variables $X_{U}^3$, $X_{R}^3$, and $X_{P}^3$, respectively.\footnote{The use of level $3$ variables to model abstract and concrete role terms is motivated by the fact that their elements, that is ordered pairs $\langle x, y \rangle$ are encoded in Kuratowski's style as $\{\{x\}, \{x,y\}\}$, namely as collections of sets of objects. Variables of level $2$ are used in the formulae $\psi_8$ and $\psi_9$ of the construction to model the fact that level $3$ variables representing role terms are binary relations.}

Then the mapping $\tau$ is defined as follows:
\smallskip

%
%
%
{\small
\noindent $\tau(C_1 \equiv \top) \defAs (\forall z)(z \in X_{C_1}^1 \leftrightarrow z \in X_{\top}^1)$,

\noindent $\tau(C_1 \equiv \neg C_2) \defAs (\forall z)(z \in X_{C_1} \leftrightarrow \neg(z \in X_{C_2}^1))$,

\noindent $\tau(C_1 \equiv C_2 \sqcup C_3 ) \defAs (\forall z)(z \in X_{C_1}^1 \leftrightarrow (z \in X_{C_2}^1 \vee z \in X_{C_3}^1))$,

\noindent $\tau(C_1 \equiv \{a\}) \defAs (\forall z)(z \in X_{C_1}^1 \leftrightarrow z = x_a)$,

\noindent $\tau(C_1 \sqsubseteq \forall R_1.C_2) \defAs (\forall z_1)(\forall z_2)(z_1 \in X_{C_1}^1 \rightarrow (\langle z_1,z_2 \rangle \in X_{R_1}^3 \rightarrow z_2 \in X_{C_2}^1))$,

\noindent $\tau(\exists R_1.C_1 \sqsubseteq C_2) \defAs (\forall z_1)(\forall z_2)((\langle z_1,z_2 \rangle \in X_{R_1}^3 \wedge z_2 \in X_{C_1}^1) \rightarrow z_1 \in X_{C_2}^1)$,

\noindent $\tau(C_1 \equiv \exists R_1.\{a\}) \defAs(\forall z)(z \in X_{C_1}^1 \leftrightarrow \langle z,x_{a}\rangle \in X_{R_1}^3)$,

\noindent $\tau(C_1 \sqsubseteq \leq_n\!\! R_1.C_2) \defAs (\forall z)(\forall z_1)\ldots (\forall z_{n+1})(z \in X_{C_1}^1 \rightarrow$

\noindent $\hfill  ( \overset{n+1}{\underset{i=1}\bigwedge}(z_i \in X_{C_2} \wedge \langle z,z_i\rangle \in X_{R_1}^3) \rightarrow \underset {i<j} {\bigvee} z_i = z_j))$,

\noindent $\tau(\geq_n\!\! R_1.C_1 \sqsubseteq C_2) \defAs (\forall z)(\forall z_1)\ldots (\forall z_{n})( \overset {n}{ \underset{i=1}\bigwedge }((z_i \in X_{C_1}^1 \wedge \langle z,z_i\rangle \in X_{R_1}^3) \rightarrow$

\noindent $\hfill  \underset {i<j} \bigwedge z_i \neq z_j) \rightarrow z \in X_{C_2}^1)$,

\noindent $\tau(C_1 \sqsubseteq \forall P_1.t_1) \defAs (\forall z_1)(\forall z_2)(z_1 \in X_{C_1}^1 \rightarrow (\langle z_1,z_2 \rangle \in X_{P_1}^3 \rightarrow z_2 \in X_{t_1}^1))$,

\noindent $\tau(\exists P_1.t_1 \sqsubseteq C_1) \defAs (\forall z_1)(\forall z_2)((\langle z_1,z_2 \rangle \in X_{P_1}^3 \wedge z_2 \in X_{t_1}^1) \rightarrow z_1 \in X_{C_1}^1)$,

\noindent $\tau(C_1 \equiv \exists P_1.\{e_{d}\}) \defAs (\forall z)(z \in X_{C_1}^1 \leftrightarrow \langle z,x_{e_{d}}\rangle \in X_{P_1}^3)$,

\noindent $\tau(C_1 \sqsubseteq \leq_n\!\! P_1.t_1) \defAs (\forall z)(\forall z_1)\ldots (\forall z_{n+1})(z \in X_{C_1}^1 \rightarrow$

\noindent $\hfill ( \overset {n+1} { \underset{i=1}\bigwedge }(z_i \in X_{t_1} \wedge \langle z,z_i\rangle \in X_{P_1}^3) \rightarrow \underset{i<j} {\bigvee} z_i = z_j))$,

\noindent $\tau(\geq_n\!\! P_1.t_1 \sqsubseteq C_1) \defAs$

\noindent $\hfill (\forall z)(\forall z_1)\ldots (\forall z_{n})( \overset {n} { \underset {i=1}\bigwedge}((z_i \in X_{t_1}^1 \wedge \langle z,z_i\rangle \in X_{P_1}^3) \rightarrow \underset {i<j} {\bigwedge} z_i \neq z_j) \rightarrow z \in X_{C_1}^1)$,

\noindent $\tau(R_1 \equiv U) \defAs (\forall Z^2)(Z^2 \in X_{R_1}^3 \leftrightarrow Z^2 \in X_{U}^3)$,

\noindent $\tau(R_1 \equiv \neg R_2) \defAs (\forall z_1)(\forall z_2)(\langle z_1,z_2\rangle \in X_{R_1}^3 \leftrightarrow \neg (\langle z_1,z_2\rangle \in X_{R_2}^3 ))$,

\noindent $\tau(R_1 \equiv R_2 \sqcup R_3) \defAs (\forall Z^2)(Z^2 \in X_{R_1}^3 \leftrightarrow (Z^2 \in X_{R_2}^3 \vee Z^2 \in X_{R_3}^3))$,

\noindent $\tau(R_1 \equiv R_2^{-}) \defAs (\forall z_1)(\forall z_2)(\langle z_1,z_2\rangle \in X_{R_1}^3 \leftrightarrow \langle z_2,z_1\rangle \in X_{R_2}^3 ))$,

\noindent $\tau(R_1 \equiv id(C_1)) \defAs (\forall z_1)(\forall z_2)(\langle z_1,z_2\rangle \in X_{R_1}^3 \leftrightarrow (z_1 \in X_{C_1}^1 \wedge z_2 \in X_{C_1}^1 \wedge z_1 =z_2))$,


\noindent $\tau(R_1 \equiv R_{2_{C_1 |}}) \defAs (\forall z_1)(\forall z_2)(\langle z_1,z_2\rangle \in X_{R_1}^3 \leftrightarrow  (\langle z_1,z_2\rangle \in X_{R_2}^3 \wedge z_1 \in X_{C_1}^1))$,

\noindent $\tau(R_1 \ldots R_n \sqsubseteq R_{n+1}) \defAs (\forall z)(\forall z_1)\ldots (\forall z_{n})$

\noindent $\hfill ((\langle z, z_1\rangle \in X_{R_1}^3 \wedge \ldots \wedge \langle z_{n-1},z_n\rangle \in X_{R_{n}}^3) \rightarrow  \langle z, z_n\rangle\in X_{R_{n+1}}^3)$,

\noindent $\tau(\refl(R_1)) \defAs (\forall z)(\langle z, z\rangle \in X_{R_1}^3)$,

\noindent $\tau(\irref(R_1)) \defAs (\forall z)(\neg (\langle z,z\rangle \in X_{R_1}^3))$,

\noindent $\tau(\fun(R_1)) \defAs (\forall z_1)(\forall z_2)(\forall z_3)((\langle z_1,z_2\rangle \in X_{R_1}^3 \wedge \langle z_1,z_3\rangle \in X_{R_1}^3) \rightarrow z_2 =z_3)$,

\noindent $\tau(P_1 \equiv P_2) \defAs (\forall Z^2)(Z^2 \in X_{P_1}^3 \leftrightarrow Z^2 \in X_{P_2}^3)$,

\noindent $\tau(P_1 \equiv \neg P_2) \defAs (\forall Z^2)(Z^2 \in X_{P_1}^3 \leftrightarrow \neg(Z^2 \in X_{P_2}^3))$,

\noindent $\tau(P_1 \sqsubseteq P_2) \defAs (\forall Z^2)(Z^2 \in X_{P_1}^3 \rightarrow Z^2 \in X_{P_2}^3)$,

\noindent $\tau(\fun(P_1)) \defAs (\forall z_1)(\forall z_2)(\forall z_3)((\langle z_1,z_2\rangle \in X_{P_1}^3 \wedge \langle z_1,z_3\rangle \in X_{P_1}^3) \rightarrow z_2 =z_3)$,

\noindent $\tau(P_1 \equiv P_{2_{C_1 |}}) \defAs (\forall z_1)(\forall z_2)(\langle z_1,z_2\rangle \in X_{P_1}^3 \leftrightarrow (\langle z_1,z_2\rangle \in X_{P_2}^3 \wedge z_1 \in X_{C_1}^1))$,

\noindent $\tau(P_1 \equiv P_{2_{|t_1}}) \defAs (\forall z_1)(\forall z_2)(\langle z_1,z_2\rangle \in X_{P_1}^3 \leftrightarrow (\langle z_1,z_2\rangle \in X_{P_2}^3 \wedge z_2 \in X_{t_1}^1))$,

\noindent $\tau(P_1 \equiv P_{2_{C_1|t_1}}) \defAs (\forall z_1)(\forall z_2)(\langle z_1,z_2\rangle \in X_{P_1}^3 \leftrightarrow$

\noindent $\hfill (\langle z_1,z_2\rangle \in X_{P_2}^3 \wedge z_1 \in X_{C_1}^1 \wedge z_2 \in X_{t_1}^1))$,

\noindent $\tau(t_1 \equiv t_2)\defAs (\forall z)(z \in X_{t_1}^1 \leftrightarrow z \in X_{t_2}^1)$,

\noindent $\tau(t_1 \equiv \neg t_2)\defAs (\forall z)(z \in X_{t_1}^1 \leftrightarrow \neg (z \in X_{t_2}^1))$,

\noindent $\tau(t_1 \equiv t_2 \sqcup t_3)\defAs (\forall z)(z \in X_{t_1}^1 \leftrightarrow (z \in X_{t_2}^1\vee z \in X_{t_3}^1))$,

\noindent $\tau(t_1 \equiv t_2 \sqcap t_3)\defAs (\forall z)(z \in X_{t_1}^1 \leftrightarrow (z \in X_{t_2}^1\wedge z \in X_{t_3}^1))$,

\noindent $\tau(t_1 \equiv \{e_{d}\})\defAs (\forall z)(z \in X_{t_1}^1 \leftrightarrow z = x_{e_{d}})$,

\noindent $\tau(a : C_1) \defAs x_a \in X_{C_1}^1$,

\noindent $\tau((a,b) : R_1) \defAs \langle x_a, x_b\rangle \in X_{R_1}^3$,

\noindent $\tau((a,b) : \neg R_1) \defAs \neg(\langle x_a, x_b\rangle \in X_{R_1}^3)$,

\noindent $\tau(a=b) \defAs x_a = x_b$,

\noindent $\tau(a\neq b) \defAs \neg (x_a = x_b)$,

\noindent $\tau(e_d : t_1) \defAs x_{e_d} \in X_{t_1}^1$,

\noindent $\tau((a,e_d) : P_1) \defAs \langle x_a, x_{e_d}\rangle \in X_{P_1}^3$,

\noindent $\tau((a,e_d) : \neg P_1) \defAs \neg(\langle x_a, x_{e_d}\rangle \in X_{P_1}^3)$,

\noindent $\tau(\alpha \wedge \beta) \defAs \tau(\alpha) \wedge \tau(\beta)$.
}

\smallskip

Let $\mathcal{K}$ be our $\shdlss$-knowledge base, and let $\ck$, $\ark$, $\crk$, and $\ik$ be, respectively, the sets of concept, of abstract role, of concrete role, and of individual names in $\mathcal{K}$. Moreover, let $N_{D}^\mathcal{K} \subseteq N_{D}$ be the set of datatypes in $\mathcal{K}$, $N_{F}^\mathcal{K}$ a restriction of $N_{F}$ assigning to every $d \in N_{\D}^\mathcal{K}$ the set $N_{F}^\mathcal{K}(d)$ of facets in $N_{F}(d)$ and in $\mathcal{K}$. Analogously, let $N_{C}^{\mathcal{K}}$ be a restriction of the function $N_{C}$ associating to every $d \in N_{\D}^\mathcal{K}$ the set $N_{C}^\mathcal{K}(d)$ of constants  contained in $N_{C}(d)$ and in $\mathcal{K}$. Finally, for every datatype $d \in N_{D}^\mathcal{K}$, let $\bfk(d)$ be the set of facet expressions for $d$ occurring in $\mathcal{K}$ and not in $N_{F}(d) \cup \{\top^{d},\bot_{d}\}$. We define the \flqsr-formula $\varphi_{\mathcal{K}}$ expressing the consistency of $\mathcal{K}$ as follows: {\small
\[
\varphi_{\mathcal{K}} \defAs \bigwedge_{i=1}^{12}\psi_i \wedge \underset {H \in \mathcal{K}}\bigwedge \tau(H) \, ,
\]}
where

\begin{itemize}{\small
\item[-] $\psi_1 \defAs (\forall z)(z \in X_{\I}^1 \leftrightarrow \neg(z \in X_{\D}^1))\wedge (\forall z)(z \in X_{\I}^1 \vee z \in X_{\D}^1)\wedge$

$\hfill \neg (\forall z)\neg (z \in X_{\I}^1) \wedge \neg (\forall z)\neg (z \in X_{\D}^1)$,

\item[-] $\psi_2 \defAs ((\forall z)(z \in X_{\I}^1 \leftrightarrow z \in X_{\top}^1) \wedge (\forall z)\neg (z \in X_{\bot})$,

\item[-] $\psi_3 \defAs \underset{A \in \ck}\bigwedge (\forall z)(z \in X_{A}^1 \rightarrow z \in X_{\I}^1)$,

\item[-] $\psi_4 \defAs ( \underset{d \in N_{D}^\mathcal{K}}\bigwedge((\forall z)(z \in X_{d}^1 \rightarrow z \in X_{\D}^1) \wedge \neg (\forall z)\neg(z \in X_{d}^1))$

$\hfill \wedge (\forall z) (\underset{(d_i,d_j \in N_{D}^\mathcal{K}, i < j)}\bigwedge (z \in X_{d_i}^1 \leftrightarrow \neg (z \in X_{d_j}^1))))$,

\item[-] $\psi_5 \defAs \underset{d \in N_{D}^\mathcal{K}}\bigwedge((\forall z)(z \in X_{d}^1 \leftrightarrow z \in X_{\top_d}^1) \wedge (\forall z)\neg(z \in X_{\bot_d}^1))$,

\item[-] $\psi_6 \defAs \underset{d \in N_{D}^\mathcal{K}} \bigwedge  \quad \underset {f_d \in N_{F}^\mathcal{K}(d)}\bigwedge (\forall z)(z \in X_{f_d}^1 \rightarrow z \in X_{d}^1)$,

\item[-] $\psi_7 \defAs (\forall z_1)(\forall z_2)((z_1 \in X_{\I}^1 \wedge z_2 \in X_{\I}^1) \leftrightarrow \langle z_1,z_2 \rangle \in X_{U}^3)$,

\item[-] $\psi_8 \defAs  \underset {R \in \ark} \bigwedge((\forall Z^2)(Z^2 \in X_R^3 \rightarrow \neg (\forall z_1)(\forall z_2) \neg (\langle z_1,z_2\rangle = Z^2))$

$\hfill \wedge (\forall z_1)(\forall z_2)(\langle z_1,z_2\rangle \in X_R^3 \rightarrow (z_1 \in X_{\I}^1 \wedge z_2 \in X_{\I}^1)))$,

\item[-] $\psi_9 \defAs \underset{T \in \crk} \bigwedge ((\forall Z^2)(Z^2 \in X_T^3 \rightarrow \neg (\forall z_1)(\forall z_2) \neg (\langle z_1,z_2\rangle = Z^2))$

$\hfill \wedge (\forall z_1)(\forall z_2)(\langle z_1,z_2\rangle \in X_T^3 \rightarrow (z_1 \in X_{\I}^1 \wedge z_2 \in X_{\D}^1)))$,

\item[-] $\psi_{10} \defAs \underset{a \in \ik} \bigwedge(x_a \in X_{\I}^1) \wedge \underset {d \in N_{D}^\mathcal{K}} \bigwedge \quad \underset {e_d \in N_{C}^\mathcal{K}(d)} \bigwedge x_{e_d} \in X_{d}^1$,

\item[-] $\psi_{11} \defAs \underset {e_{d_1},\ldots, e_{d_n} \textrm{ in } \mathcal{K}} \bigwedge (\forall z)(z \in X_{\{e_{d_1},\ldots, e_{d_n}\}}^1\leftrightarrow \overset{n}  { \underset {i=1} \bigvee }(z = x_{e_{d_i}}))$

$\hfill \wedge \quad \underset {a_{1},\ldots, a_{n} \textrm{ in } \mathcal{K}} \bigwedge (\forall z)(z \in X_{\{a_{1},\ldots, a_{n}\}}^1 \leftrightarrow  \overset {n}{\underset {i=1} \bigvee}(z = x_{a_{i}}))$,

\item[-] $\psi_{12} \defAs \underset {d \in N_{\D}^\mathcal{K}} \bigwedge  \quad \underset {\psi_d \in \bfk(d)} \bigwedge (\forall z)(z \in X_{\psi_d}^1 \leftrightarrow z \in \sigma(X_{\psi_d}^1))$,
}

with $\sigma$ the transformation function from \flqsr-variables of level 1 to \flqsr-formulae recursively defined, for $d \in N_\D^\mathcal{K}$, by
\[ {\small
\sigma(X_{\psi_d}^1) \defAs \begin{cases}
X_{\psi_d}^1 & \text{if } \psi_d \in N_{F}^\mathcal{K}(d) \cup \{\top^{d},\bot_{d}\}\\
\neg \sigma(X_{\chi_d}^1) & \text{if }  \psi_d = \neg \chi_d\\
\sigma(X_{\chi_d}^1) \wedge \sigma(X_{\varphi_d}^1) & \text{if } \psi_d = \chi_d \wedge \varphi_d\\
\sigma(X_{\chi_d}^1) \vee \sigma(X_{\varphi_d}^1) & \text{if } \psi_d = \chi_d \vee \varphi_d\,.
\end{cases} }
\]
%
%
%

\end{itemize}

\noindent In the above formulae, the variable $X_{\I}^1$ denotes the set of individuals $\I$, $X_{d}^1$ a datatype $d \in N_{D}^\mathcal{K}$, $X_{\D}^1$  a superset of the union of datatypes in $ N_{D}^\mathcal{K}$, $X_{\top_d}^1$ and $X_{\bot_d}^1$ the constants $\top_d$ and $\bot_d$, and $X_{f_d}^1$, $X_{\psi_d}^1$ a facet $f_d$ and a facet expression $\psi_d$, for $d \in N_{D}^\mathcal{K}$, respectively. In addition, $X_{A}^1$, $X_{R}^3$, $X_{T}^3$ denote a concept name $A$, an abstract role name $R$, and a concrete role name $T$ occurring in $\mathcal{K}$, respectively. Finally, $X_{\{e_{d_1},\ldots,e_{d_n}\}}^1$ denotes a data range $\{e_{d_1},\ldots,e_{d_n}\}$ occurring in $\mathcal{K}$, and $X_{\{a_{1},\ldots,a_{n}\}}^1$ a finite set $\{a_1,\ldots,a_n\}$ of nominals in $\mathcal{K}$.

Clearly, the constraints $\psi_1$-$\psi_{12}$ have been introduced to guarantee that each model of $\varphi_{\mathcal{K}}$ can be easily transformed into a $\shdlss$-interpretation.

Next we show that the consistency problem for $\mathcal{K}$ is equivalent to the satisfiability problem for $\varphi_{\mathcal{K}}$.

Let us first assume that $\varphi_{\mathcal{K}}$ is satisfiable. It is not hard to see that $\varphi_{\mathcal{K}}$ is satisfied by a \flqsr-model of the form $\mathbfcal{M} = (D_1 \cup D_2, M)$, where:

\medskip
- $D_1$ and $D_2$ are disjoint nonempty sets and $\underset{d \in N_{D}^\mathcal{K}}\bigcup d^{\D} \subseteq D_2$,

- $MX_{\I}^1 \defAs D_1$, $MX_{\D}^1 \defAs D_2$,

- $MX_{d}^1 \defAs d^{\D}$, for every $d \in N_{D}^\mathcal{K}$,

- $MX_{f_d}^1 \defAs f_d^{\D}$, for every $f_d \in N_{F}^\mathcal{K}(d)$, with $d \in N_{D}^\mathcal{K}$.

\smallskip
\noindent
Exploiting the fact that $\mathbfcal{M}$ satisfies the constraints $\psi_1$-$\psi_{12}$, it is then possible to define a $\shdlss$-interpretation $\I_{\mathbfcal{M}} = (\Delta^{\I}, \Delta_{\D}, \cdot^{\I})$, by putting $\Delta^{\I} \defAs MX_{\I}^1$, $\Delta_{\D} \defAs MX_{\D}^1$, $A^\I \defAs MX_{A}^1$, for every concept name $A \in \ck$, $S^\I \defAs MX_{S}^3$, for every abstract role name $S \in \ark$, $T^\I \defAs MX_{T}^3$, for every concrete role name $T \in \crk$, and $a^\I \defAs Mx_{a}$, for every individual $a \in \ik$.

Since $\mathbfcal{M} \models \underset {H \in \mathcal{K}}\bigwedge \tau(H)$ and, as can be easily checked, $\I_{\mathbfcal{M}} \models_{\D} H$ if and only if $\mathbfcal{M} \models \tau(H)$, for every statement $H \in \mathcal{K}$, we plainly have $\I_{\mathbfcal{M}} \models_{\D} \mathcal{K}$, namely $\mathcal{K}$ is consistent, as we wished to prove.

Conversely, let $\mathcal{K}$ be a consistent $\shdlss$-knowledge base. Then, there is a $\shdlss$-interpretation $\I = (\Delta^{\I}, \Delta_{\D}, \cdot^{\I})$ such that $\I \models_{\D} \mathcal{K}$. We show how to construct, out of the datatype map $\D$ and the $\shdlss$-interpretation $\I$, a \flqsr-interpretation $\mathbfcal{M}_{\I,\D} = (D_{\I,\D}, M_{\I,\D})$ which satisfies $\varphi_{\mathcal{K}}$. Let us put $D_{\I,\D} \defAs \Delta^{\I} \cup \Delta_{\D}$ and define $M_{\I,\D}$ by putting $M_{\I,\D} X_{\I}^1 \defAs \Delta^{\I}$, $M_{\I,\D} X_{\D}^1 \defAs \Delta_{\D}$,  $M_{\I,\D} X_{U}^3 \defAs U^{\I}$, $M_{\I,\D} X_{dr}^1 \defAs dr^{\D}$, for every variable $X_{dr}^1$ in $\varphi$ denoting a data range $dr$ occurring in $\mathcal{K}$, $M_{\I,\D} X_{A}^1 \defAs A^{\I}$, for every $X_{A}^1$ in $\varphi$ denoting a concept name in $\mathcal{K}$, and $M_{\I,\D} X_{S}^3 \defAs S^{\I}$, for every $X_{S}^3$ in $\varphi$ denoting an abstract role name in $\mathcal{K}$. Variables $X_{T}^3$, denoting concrete role names, and variables $x_a, x_{e_d}$, denoting individuals and datatype constants, respectively, are interpreted in a similar way. From the definitions of $\D$ and $\I$, it follows easily that  $\mathbfcal{M}_{\I,\D}$ satisfies the formulae $\psi_1$-$\psi_{12}$ and $\tau(H)$, for every statement $H \in \mathcal{K}$, and, therefore, that $\mathbfcal{M}_{\I,\D}$ is a model for $\varphi_{\mathcal{K}}$.\qed
%
%
\end{proof}

Some considerations on the expressive power of the logic $\shdlss$ are in order. Despite $\shdlss$ allows one to express existential quantification and at-least number restriction (resp., universal quantification and at-most number restriction) only on the left- (resp., right-) hand side of inclusion axioms, it is more liberal than \sroiqd\space
in the construction of role inclusion axioms since the roles involved are not required to be subject to any ordering relationship. For example, the role hierarchy $\{RS \sqsubseteq S,  RT \sqsubseteq R, VT \sqsubseteq T, VS \sqsubseteq V\}$ presented in \cite{Horrocks2006} and not expressible in \sroiqd\space is admitted by the language of $\shdlss$. Moreover, the notion of simple role is not needed in the definition of role inclusion axioms and of axioms involving number restrictions. In addition, Boolean operators on roles are admitted and can be introduced in inclusion axioms such as, for instance, $R_1 \sqsubseteq R_2 \sqcap R_3$ and $R_1 \sqsubseteq \neg R_2 \sqcup R_3$. Finally, $\shdlss$ treats derived datatypes by admitting datatype terms constructed from data ranges by means of a finite number of applications of the Boolean operators. Basic and derived datatypes can be used inside inclusion axioms involving concrete roles.

\begin{remark}
For a fixed positive integer $h$, a $\shdlss$-knowledge base $\mathcal{K}$ is said to be \emph{$h$-restricted} if an atom of any of the forms $R_1\ldots R_{n_1} \sqsubseteq R$, $\geq_{n_2} \!\!R.C_1 \sqsubseteq C_2$, $\geq_{n_3} \!\!P.t_1 \sqsubseteq t_2$, $C_1 \sqsubseteq {\leq_{n_4} \!\!R. C_2}$, $t_1 \sqsubseteq {\leq_{n_5} \!\!P. t_2}$ occurs in $\mathcal{K}$ only if $n_1,n_2,n_3,n_4,n_5 \leq h$.

It turns out that by using the same function
$\tau$ introduced in the proof of Theorem \ref{theorem} and some additional constraints, the consistency problem for a $h$-restricted $\shdlss$-knowledge base $\mathcal{K}$ can be expressed by a formula $\varphi'_\mathcal{K}$ such that
\begin{itemize}
\item[(i)] $\varphi'_\mathcal{K}$ belongs to the sublanguage $(\flqsr)^h$ of \flqsr, whose satisfiability problem is \textbf{NP}-complete (see \cite{CanNic2013} for details), and

\item[(ii)] the size of $\varphi'_\mathcal{K}$ is polynomially related to that of $\mathcal{K}$.
\end{itemize}
From (i) and (ii) above, and from \textbf{NP}-completeness of the satisfiability problem for propositional logic, it follows immediately that the consistency problem for $h$-restricted $\shdlss$-knowledge bases is \textbf{NP}-complete.

In practice, $h$-restricted $\shdlss$-knowledge bases are quite expressive: for instance, in \cite{santaLAP} we have shown that the ontology \textsf{Ontoceramic}, for ceramics classification, is representable in $(\flqsr)^3$ and, much in the same way, it can be shown that it is representable as a $3$-restricted $\shdlss$-knowledge base.
\end{remark}

\subsection{Translating SWRL-rules into \flqsr-formulae}\label{rules}
The possibility of extending ontologies with rules has become a fundamental requirement to increase the expressiveness and the reasoning power of OWL knowledge bases. In a general sense, a rule is any sentence stating that if a set of premises is satisfied in a given model, then a certain conclusion must be satisfied in the same model. Although OWL is provided with several sorts of conditionals, these are, however, very constrained. Moreover, it is not possible to mix directly classes (concepts) and properties (roles)  and include non-monotonic reasoning such as negation as failure.\footnote{We recall that a logic is non-monotonic if some conclusions can be invalidated when more knowledge is added.}
Such considerations led to the definition of SWRL \cite{swrl}, a rule language combining OWL with the Unary/Binary Datalog fragment of the Rule Markup Language. SWRL allows users to write rules containing OWL constructs providing more reasoning capabilities than OWL alone.

An SWRL-rule $\mathsf{r}$ has the form $(\forall x_1,\ldots,x_n) ( \mathtt{B}  \implies \mathtt{H})$, where:
 \begin{itemize}
 \item[-]  $\mathtt{B}$ (the body of $\mathsf{r}$) and $\mathtt{H}$ (the head of $\mathsf{r}$) are conjunctions of atoms of the following types: $x \in C,~~y \in t,~~\langle x,y \rangle \in R,~~\langle x,y \rangle \in T,~~x=y,~~x \neq y$,
with $C$ a concept name, $t$ a datatype, $R$ an abstract role name, $T$ a concrete role name, and $x$, $y$ either individuals or variables (in the specific cases of atoms of the forms $y \in t$ and $\langle x,y \rangle \in T$, $y$ can be either a datatype constant or a variable), and

\newcommand{\Var}{\textit{Var}}
\item[-]
$\Var(\mathtt{H}) \subseteq \Var(\mathtt{B})=\{x_1,\ldots,x_n\}$, where $\Var(\mathtt{H})$ and $\Var(\mathtt{B})$ are the sets of variables occurring in $\mathtt{H}$ and in $\mathtt{B}$, respectively.
 \end{itemize}

 In Table \ref{SWRL} we give some examples showing how SWRL-rules can be expressed by \flqsr-formulae. For space reasons we do not provide here a formal translation function. However, it is not hard to see that
 it could be constructed by modifying the map $\tau$ introduced in the proof of Theorem \ref{theorem}.
\newcommand{\hasParent}{\textit{hasParent}}
\newcommand{\hasBrother}{\textit{hasBrother}}
\newcommand{\hasUncle}{\textit{hasUncle}}
\newcommand{\Location}{\textit{Location}}
\newcommand{\Trauma}{\textit{Trauma}}
\newcommand{\isLocationOf}{\textit{isLocationOf}}
\newcommand{\isPartOf}{\textit{isPartOf}}
\newcommand{\Person}{\textit{Person}}
\newcommand{\hasAge}{\textit{hasAge}}
\newcommand{\Adult}{\textit{Adult}}
\newcommand{\hasLocation}{\textit{hasLocation}}
\newcommand{\hasRegion}{\textit{hasRegion}}
\newcommand{\Region}{\textit{Region}}
{\small
\begin{longtable}{|c|>{\centering\arraybackslash}m{10cm}|}
\hline
Type of Rule & Rule\\
\hline
SWRL-rule & $ \hasParent(X, Y) , \hasBrother(Y, Z) :- \hasUncle(X, Z).$  \\
\flqsr-rule & $(\forall x)(\forall y)(\forall z)( \langle x,y \rangle \in X^3_{\hasParent} \wedge \langle y,z \rangle \in X^3_{\hasBrother} \rightarrow \langle x,z \rangle \in X^3_{\hasUncle})$\\
\hline
SWRL-rule & $\Location(X),\Trauma(Y),\isLocationOf(X, Y),\isPartOf(X,Z)$  \\
       & :- $\isLocationOf(Z, Y)$ \\
\flqsr-rule & $(\forall x)(\forall y)(\forall z)( x \in X^1_{\Location} \wedge y \in X^1_{\Trauma} \wedge \langle x,z \rangle \in X^3_{\isPartOf} \rightarrow \langle z,y \rangle \in X^3_{\isLocationOf})$\\
\hline
SWRL-rule & $ \Person(X), \hasAge(X, Y), (Y \geq 18) :- \Adult(X)$  \\
\flqsr-rule & $(\forall x)(\forall y)( x \in X^3_{\Person} \wedge \langle x,y \rangle \in X^3_{\hasAge} \wedge y \in X^1_{\geq 18} \rightarrow x \in X^1_{\Adult} )$\\
\hline
SWRL-rule & $ Region(Y), \hasLocation(X, Y) :- \hasRegion(X,Y)$  \\
\flqsr-rule & $(\forall x)(\forall y)( y \in X^3_{\Region} \wedge \langle x,y \rangle \in X^3_{\hasLocation}\rightarrow  \langle x,y \rangle \in X^3_{hasRegion} )$\\
\hline
\caption{ Examples of rule translation.}\label{SWRL}
\end{longtable}}

\section{Conclusions and Future Work}\label{conclusions}
We have introduced the description logic $\shdlss$ which admits, among other features, datatype reasoning, role chain axioms without regularity conditions on roles, min (resp., max) cardinality construct on the left-hand (resp., right-hand) side of inclusion axioms extended to non-simple roles, constructs of full negation, union, and intersection for abstract roles. As discussed at the end of Section \ref{dlss}, the logic $\shdlss$ turns out to be quite expressive, if compared with \sroiqd, the logic underpinning the Web Ontology Language OWL.
However, although $\shdlss$ is endowed with features not supported by \sroiqd, it is not a proper extension of it, as $\shdlss$ admits existential (resp., universal) quantification only on the left-hand (resp., right-hand) side of inclusion axioms.

Through a suitable translation process, we have then shown that the consistency problem for $\shdlss$-knowledge bases can be effectively reduced to the satisfiability problem for the decidable fragment of set theory \flqsr. Moreover, in the restricted case in which a $\shdlss$-knowledge base $\mathcal{K}$ can involve only role chain axioms $R_1 \ldots R_ m \sqsubseteq R$ and inclusion axioms ${\geq_n\!\! R. C_1} \sqsubseteq C_2$, $C_1 \sqsubseteq {\leq_p\!\!R.C_2}$ such that $m$, $n$, and $p$ do not exceed a fixed constant (hence independent of the size of $\mathcal{K}$), we have shown that the consistency problem is \textbf{NP}-complete, as it can be polynomially reduced to the satisfiability problem for a subfragment of \flqsr\ which has an \textbf{NP}-complete decision problem.  Finally, we have also translated SWRL-rules into the \flqsr\ language.


We plan to introduce the constructs of union and intersection of concrete roles and to extend our results to include also datatype groups (here we have considered only a simple form of datatypes) and to admit Boolean operators on concrete roles by defining a suitable strategy of datatype checking. Moreover, we intend to extend the fragment \flqsr\space with metamodelling capabilities \cite{Motik05,GlimmRV10,HomolaKSV14}, so as to make it possible to define concepts containing other concepts and roles (i.e., meta-concepts) and relationships between concepts or between roles (i.e., meta-roles).
Finally, we intend to implement efficient reasoners for suitable fragments of
\flqsr.

\bibliographystyle{ieeetr}
\bibliography{cite} 

\end{document}